\documentclass[onecolumn]{IEEEtran}
\usepackage{gensymb}
\usepackage{graphicx}
\usepackage{bm,epstopdf}
\usepackage{soul,multirow}

\usepackage{mathrsfs}

\usepackage{amsmath,amssymb,amsthm,wasysym,latexsym,color}
\usepackage{url,algpseudocode} 
\usepackage[boxruled]{algorithm2e}
\usepackage[font=footnotesize,center]{caption}
\makeatletter
\renewcommand{\algocf@caption@boxruled}{%
	\vspace{-1pt}
	\hbox{%
		\vbox{
			\vskip\interspacetitleboxruled%
			\unhbox\algocf@capbox\hfill
			\vskip\interspacetitleboxruled
		}%
		\hskip-4pt%
	}\nointerlineskip%
}%
\makeatother
\makeatletter
\renewcommand{\@algocf@capt@plain}{above}
\makeatother

\SetAlCapSkip{1em}

\SetKwInput{KwInput}{Input}
\SetKwInput{KwOutput}{Output}

\newtheorem*{theorem*}{Theorem}

\usepackage{cite}

\definecolor{orange}{RGB}{255,107,0}

\definecolor{shadecolor}{RGB}{220,220,220}

\theoremstyle{remark}
\theoremstyle{remark*}\newtheorem*{remark*}{Remark}

\newtheorem{claim}{Claim}
\newtheorem{Lemma}{Lemma}
\theoremstyle{proposition}\newtheorem{proposition}{Proposition}
\theoremstyle{assumption}


\newcommand{\ahmed}[1]{{\color{black} {#1}}}
\begin{document}
	
	\title{Data-Driven Learning-Based Optimization for Distribution System State Estimation}
	
	\author{
	 Ahmed S. Zamzam, \IEEEmembership{Student Member,~IEEE,}
		Xiao Fu, \IEEEmembership{Member,~IEEE,} and
		Nicholas D. Sidiropoulos,~\IEEEmembership{Fellow,~IEEE}
		\thanks{The work of A.S. Zamzam and N.D. Sidiropoulos was supported in part by NSF grant 1525194. A. S. Zamzam is with the ECE Dept., Univ. of Minnesota, Minneapolis, MN 55455, USA. X. Fu is with the School of Electrical Engineering and Computer Science, Oregon State University, Corvallis, OR, 97331. N. D. Sidiropoulos is with the Department of Electrical and Computer Engineering, University of Virginia, Charlottesville, VA 22904.
			E-mails: ahmedz@umn.edu, xiao.fu@oregonstate.edu, nikos@virginia.edu.
			}
	}
	\maketitle
\begin{abstract}
Distribution system state estimation (DSSE) is a core task for monitoring and control of distribution networks. Widely used algorithms such as Gauss-Newton perform poorly with the limited number of measurements typically available for DSSE, often require many iterations to obtain reasonable results, and sometimes fail to converge. DSSE is a non-convex problem, and working with a limited number of measurements further {aggravates} the situation, as indeterminacy induces multiple global (in addition to local) minima. Gauss-Newton is {also} known to be sensitive to initialization{. Hence,} the situation is far from ideal. It {is} therefore natural to ask if there is a smart way of initializing Gauss-Newton that will avoid these DSSE-specific pitfalls. This paper proposes using historical or simulation-derived data to train a shallow neural network to `learn to initialize' -- that is, map the available measurements to  a point in the neighborhood of the true latent states (network voltages), which is used to initialize Gauss-Newton. It is shown that this hybrid machine learning / optimization approach yields superior performance in terms of stability, accuracy, and runtime efficiency, compared to conventional optimization-only approaches. It is also shown that judicious design of the neural network training cost function helps to improve the overall DSSE performance. 
\end{abstract}

\begin{keywords}
Distribution network state estimation, phasor measurement units, machine learning, neural networks, Gauss-Newton, least squares approximation.
\end{keywords}

\section{Introduction}
\label{sec:intro}
State estimation (SE) techniques are used to monitor power grid operations in real-time. Accurately monitoring the network operating point is critical for many control and automation tasks, such as Volt/VAr optimization, feeder reconfiguration and restoration. SE uses measured quantities like nodal voltages, injections, and line flows, together with physical laws in order to obtain an estimate of the system state variables, i.e., bus voltage magnitudes and angles~\cite{kekatos2017psse, dsse2018wgcs} across the network. SE techniques have  also proven to be useful in network `forensics', such as spotting bad measurements and identifying gross modelling errors~\cite{lin2018highly}.
	
Unlike transmission networks where measurement units are placed at almost all network nodes, the SE task in distribution systems is particularly challenging due to the scarcity of real-time measurements. This is usually compensated by the use of so-called \emph{pseudo-measurements}. Obtained through short-term load and renewable energy forecasting techniques, these pseudo-measurements play a vital rule in enabling distribution system state estimation (DSSE)~\cite{ghosh1997load, manitsas2012distribution, dvzafic2017multi}.
Several DSSE solvers based on weighted least squares (WLS) transmission system state estimation methods have been proposed~\cite{baran1994, li1996state, singh2009choice, kekatos2013distributed,  Wang2018}. A three-phase nodal voltage formulation was used to develop a WLS-based DSSE solver in~\cite{baran1994, li1996state}. Recently, the authors of~\cite{dzafic2018hybrid} used Wirtinger calculus to devise a new approach for WLS state estimation in the complex domain. In order to reduce the computational complexity and storage requirements, the branch-based WLS model was proposed in~\cite{baran1995branch, wang2004revised}. However, such gains can be only obtained when the target power system features only wye-connected loads that are solidly grounded. 
It is also recognized that incorporating phasor measurements in DSSE improves the observability and the estimation accuracy~\cite{phadke1986state, zivanovic1996implementation}. Therefore, the DSSE approach developed in this paper considers the case where classical (quadratic) and phasor (linear) measurements are available, as well as pseudo-measurements provided through short term forecasting algorithms.

WLS DSSE is a non-convex problem {that may have multiple} local minima, and working with a limited number of measurements {empirically aggravates} \textcolor{black} the situation, as it may introduce multiple local minima as well. Furthermore, Gauss-Newton type algorithms {behave very differently when using different initializations---the algorithms} may need many iterations, or even fail to converge. It is therefore natural to ask if there is a smart way of initializing Gauss-Newton that will avoid these pitfalls? 

\noindent {\bf Contributions.} 
In this paper, we propose a novel learning architecture for the DSSE task. Our idea is as follows. A wealth of historical data is often available for a given distribution system. This data is usually stored and utilized in various other network management tasks, such as load and injection forecasting. Even without detailed recording of the network state, we can reuse this data to simulate network operations off-line. We can then think of network states and measurements as {\sf (output,input)} training pairs, which can be used to train a neural network (NN) to `learn' a function that maps measurements to states. After the mapping function is learned, estimating the states associated with a fresh set of measurements only requires very simple operations---passing the measuremnts through the learned NN. This would greatly improve the efficiency of DSSE, bringing real-time state estimation within reach. 
Accurate and cheap DSSE using an NN may sound too good to be true, and in some sense (in its raw form) it is; but there is also silver lining, as we will see. 

Known as universal function approximators, neural networks have made a remarkable comeback in recent years, outperforming far more complicated (and disciplined) methods in several research fields; see~\cite{Krizhevsky-2012, Goodfellow-2013, Wan-2013} for examples. One nice feature of neural networks and other machine learning approaches is that they alleviate the computational burden at the operation stage---by shifting computationally intensive `hard work' to the off-line training stage. 

However, accurately learning the end-to-end mapping from the available measurements to the exact network state is very challenging in our context---the accuracy achieved by convergent Gauss-Newton iterates ({under good initializations}) is hard to obtain using learning approaches. The mapping itself is very complex, necessitating a wide and/or deep NN that is hard to train with reasonable amounts of data. ~\ahmed{In addition, training a deep NN (DNN) is computationally cumbersome requiring significant computing resources.} Also, DNN slows down real-time estimation, as passing the input through its layers is a sequential process that cannot be parallelized. To circumvent these obstacles, we instead propose to train a \textit{shallow} neural network to `learn to initialize'---that is, map the available measurements to  a point in the neighborhood of the true latent states, which is then used to initialize Gauss-Newton; see Fig.~\ref{fig:idea} for an illustration. 
\ahmed{When the Gauss-Newton solver is initialized at a point in the vicinity of the optimal solution, it enjoys quadratic convergence~\cite{ferreira2009local}; otherwise, divergence is possible.}
We show that such a hybrid machine learning / optimization approach yields superior performance compared to conventional optimization-only approaches, in terms of stability, accuracy, and runtime efficiency. We demonstrate these benefits using convincing experiments with the benchmark IEEE-37 distribution feeder with several renewable energy sources installed and several types of phasor and conventional measurements, as well as pseudo-measurements. The key to success is \textit{appropriate design of the NN training cost function} for the `neighborhood-finding' NN. As we will see, the proposed cost function serves our purpose much better than using a generic cost function for conventional NN training. \ahmed{In addition, owing to the special design of the training cost function, the experiments corroborate the resiliency of the proposed approach in case of modest network reconfiguration events.}

  \begin{figure}[htbp]
  	\centering
  	\includegraphics[width=0.5\columnwidth]{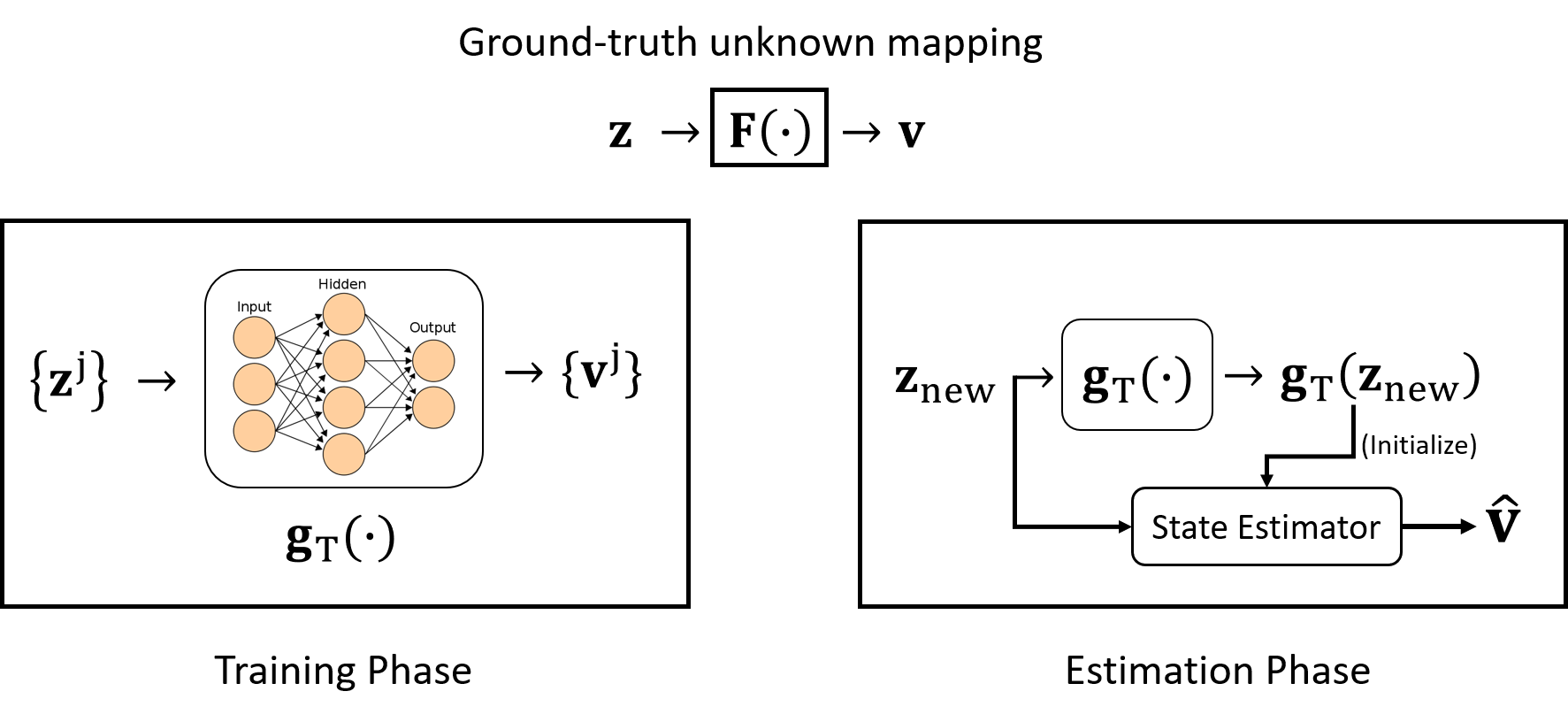}
  	\caption{The proposed learning-based DSSE}
  	\label{fig:idea}
  \end{figure}
  
\noindent {\bf Context.} Machine learning approaches are not entirely new in the power systems / smart grid area. For instance, an online learning algorithm was used in~\cite{o2010residential} to shape residential energy demand and reduce operational costs. In~\cite{fang2011online}, a multi-armed bandit online learning technique was employed to forecast the power injection of renewable energy sources. An early example of using NNs in estimation problems appeared in~\cite{amin1997system} as part of damage-adaptive intelligent flight control. {Closer to our present context,\cite{manitsas2012distribution} proposed the use of an artificial neural network that takes the measured power flows as input and aims to estimate the bus injections, which are later used as pseudo-measurements in the state estimation. In contrast to our approach where the NN is used to approximate the network state given the conventional measurements as well as the pseudo-measurements, the authors of~\cite{manitsas2012distribution} designed an artificial NN to generate pseudo-measurements from the available power flow measurements.} 
\ahmed{In addition, artificial neural networks have been used for the prediction step in dynamic state estimation~\cite{rousseaux1988dynamic}, and for forecasting-aided state estimation~\cite{da1993state, Coutto-2009} where the state of the network is estimated from the previous sequence of states, without using conventional measurements to anchor the solution. This is better suited for transmissions systems, which are more predictable relative to distribution systems with time-varying loads.} {As the installation of renewable energy sources in the distribution grid surges, the huge volatility brought by these energy sources~\cite{29825} induces rapid changes in network state, and thus the previously estimated state is often bad initialization for the next instance of DSSE.}
To the best of our knowledge, machine learning approaches have not yet been applied to the core DSSE optimization task, which is the focus of our work.

	
\noindent{\bf Notation}: matrices (vectors) are denoted by boldface capital (small) letters; $(\cdot)^T$, $ \overline{(\cdot)} $ and $(\cdot)^H$ stand for transpose, complex-conjugate and complex-conjugate transpose, respectively; and $|(\cdot)|$ denotes the magnitude of a number or the cardinality of a set.
	
\section{Distribution System State Estimation}
\subsection{Network Representation}
Consider a multi-phase distribution network consisting of $ N + 1 $ nodes and $ L $ edges represented by a graph $ \mathcal{G} := (\mathcal{N},\mathcal{L}) $, whose set of multi-phase nodes (buses) is {indexed by} $ \mathcal{N} := \{0, 1, \ldots, N\} $, and $ \mathcal{L} \subseteq \mathcal{N} \times \mathcal{N} $ represents the lines in the network. Let the node $ 0 $ be the substation that connects the system to the transmission grid. The set of phases at bus $ n $ and line $ (l, m) $ are denoted by $ \boldsymbol{\varphi}_{n} $ and $ \boldsymbol{\varphi}_{lm} $, respectively. Let the voltage at the $ n $-th bus for phase $ \phi $ be denoted by $ v_{n,\phi} $. Then, define $ {\bf v}_{n} := [v_{n, \phi}]_{\phi\in \boldsymbol{\varphi}_n}$ to collect the voltage phasors at the phases of bus $ n $. In addition, let the vector $ {\bf v} $ concatenate the vectors $ {\bf v}_n $ for all $ n \in \mathcal{N} $. Lines $ (l, m) \in \mathcal{L}$ are modeled as $ \pi $-equivalent circuit, where phase impedance and shunt admittance are denoted by $ {\bf Z}_{lm} \in \mathbb{C}^{|\boldsymbol{\varphi}_{lm}|\times|\boldsymbol{\varphi}_{lm}|} $ and $ \check{\bf Y}_{lm} \in \mathbb{C}^{|\boldsymbol{\varphi}_{lm}|\times|\boldsymbol{\varphi}_{lm}|} $, respectively.
	
	\subsection{Problem Formulation}
	The DSSE task amounts to recovering the voltage phasors of buses given measurements related to real-time physical quantities, and the available pseudo-measurements. Actual measurements are acquired by smart meters, PMUs, and $ \mu $PMUs that are placed at some locations in the distribution network. The measured quantities are usually noisy and adhere to
	\begin{equation}\label{eq:meas1}
	\tilde{z}_\ell = \tilde{h}_\ell({\bf v}) + \xi_\ell, \qquad 1 \leq \ell \leq L_m
	\end{equation}
	where $ \xi_\ell $ accounts for the zero-mean measurement noise with known variance $ \tilde{\sigma}_\ell^2 $. The functions $ \tilde{h}_\ell({\bf v}) $ are dependent on the type of the measurement, and can be either linear or quadratic relationships. In the next section, the specific form of $ \tilde{h}_\ell({\bf v}) $ will be discussed. In addition, load and generation forecasting methods are employed to obtain pseudo-measurements that can help with identifying the network state. The forecasted quantities are modeled as 
	\begin{equation}\label{eq:meas2}
	\check{z}_\ell = \check{h}_\ell ({\bf v}) + \zeta_\ell, \qquad 1 \leq \ell \leq L_s
	\end{equation}
	where $ \zeta_\ell $ represents the zero-mean forecast error which has a variance of $ \check{\sigma}_\ell^2 $. Since $ \check{z}_\ell ${'s} represent power-related quantities, they are usually modeled as quadratic functions of the state variable $ {\bf v} $. While the value of the measurement noise variance {$ \tilde{\sigma}_\ell^2 $} 
	depends on the accuracy of the measuring equipment, the variance of the forecast error can be determined using historical forecast data.
	
	Let $ {\bf z} $ be a vector of length $ L = L_m + L_s $ containing the measurements and pseudo-measurements, and $ {\bf h}({\bf v}) $ the equation relating the measurements to the state vector $ {\bf v} $, which will be specified in the next section.
	Adopting a weighted least-squares formulation, the problem can be cast as follows
	\begin{align} \label{eq:psse}\nonumber
	\min_{\bf v}\ J({\bf v}) &= \sum_{\ell = 1}^{L_m} \tilde{w}_\ell \big(\tilde{z}_\ell - \tilde{h}_\ell ({\bf v})\big)^2 + \sum_{\ell = 1}^{L_s} \check{w}_\ell \big(\check{z}_\ell - \check{h}_\ell ({\bf v})\big)^2\\
	&= \ ({\bf z} - {\bf h}({\bf v}))^T {\bf W} ({\bf z} - {\bf h}({\bf v}))
	\end{align}
	where the values of $ \tilde{w}_\ell $ and $ \check{w}_\ell $ are inversely proportional to $ \sigma_\ell^2 $ and $ \check{\sigma}_\ell^2 $, respectively. The optimization problem~\eqref{eq:psse} is {non-convex} due to the nonlinearity of the measurement mappings $ {\bf h}({\bf v}) $ inside the squares.

	
%
	
	\subsection{Available Measurements for DSSE}
As indicated in the previous subsection, only few real-time measurements are usually available in distribution networks, relative to the obtainable measurements in transmission systems. Therefore, pseudo-measurements are used to alleviate the issue of solving an underdetermined problem. 
{There are always different latencies for different sources of measurements which bring up the issue of time skewness. Many approaches have been proposed in the literature to tackle the issue~\cite{Zhang-2013}. In this work, assume that the issue is resolved using one of the solutions in the literature, and hence, the measurements are assumed to be synchronized.}
First, the measurements function $ \tilde{h}({\bf v}) $ will be introduced for all types of available measurements. Then, the construction of the pseudo-measurements mappings $ \check{h}({\bf v}) $ will be explained. 
	
The measurement functions consist of:

	\noindent
	$\bullet$ {\it phasor measurements} which represent the complex nodal voltages $ {\bf v}_n $, or current flows $ {\bf i}_{lm} $. The corresponding measurement function is linear in the state variable $ {\bf v} $. These measurements are usually obtained by the PMUs and $ \mu $PMUs. Each measurement of this type is handled as two measurements, i.e., the real and imaginary parts of the complex quantities are handled as two measurements. For the nodal voltages, the real and imaginary parts are given as follows
	\begin{equation}
	\Re\{v_{n,\phi}\} =\ \frac{1}{2}\ {\bf e}_{n,\phi}^T\ ({\bf v}_n + \overline{\bf v}_n),
	\end{equation}
	\begin{equation}
	\Im\{v_{n,\phi}\} =\ \frac{1}{2j}\ {\bf e}_{n,\phi}^T\ ({\bf v}_n -  \overline{\bf v}_n)
	\end{equation}
	where $ {\bf e}_{\phi} $ is the $ \phi $-th canonical basis in $ \mathbb{R}^{|\boldsymbol{\varphi}_{n}|} $. In addition, the current flow measurements can be modeled as
	\begin{equation}
	\Re\{i_{lm,\phi}\} = \frac{1}{2}\ {\bf e}_{lm,\phi}^T\ \big( {\bf Y}_{lm} ({\bf v}_l - {\bf v}_m) +  \overline{\bf Y}_{lm} (\overline{\bf v}_l  - \overline{\bf v}_m ) \big)
	\end{equation}
	\begin{equation}
	\Im\{i_{lm,\phi}\} = \frac{1}{2j}\ {\bf e}_{lm,\phi}^T\ \big( {\bf Y}_{lm} ({\bf v}_l - {\bf v}_m) -  \overline{\bf Y}_{lm}  (\overline{\bf v}_l  - \overline{\bf v}_m ) \big)
	\end{equation}
	where $ {\bf Y}_{lm} $ is the inverse of $ {\bf Z}_{lm} $, and $ {\bf e}_{lm,\phi} $ is the $ \phi $-th canonical basis in $ \mathbb{R}^{|\boldsymbol{\varphi}_{lm}|} $.
	
	\noindent
	$\bullet$ {\it real-valued measurements} which encompass voltage magnitudes $ |{v}_{n,\phi}| $, current magnitudes $ |i_{lm,\phi}| $, and real and reactive power flow measurements $ p_{lm,\phi}, q_{lm,\phi} $. These measurements are obtained by SCADA systems, Distribution Automation, Intelligent Electronic Devices, and PMUs. The real-valued measurements are nonlinearly related to the state variable $ {\bf v} $. The measured voltage magnitude square, and active and reactive power flows can be represented as quadratic functions of the state variable $ {\bf v} $, see~\cite{DallAnese13}. The current magnitude squared can be written as follows
	\begin{equation}
	|i_{lm,\phi}|^2 = ({\bf v}_l - {\bf v}_m)^H {\bf y}_{lm,\phi}^H {\bf y}_{lm,\phi} ({\bf v}_l - {\bf v}_m)
	\end{equation}
	where $ {\bf y}_{lm,\phi} $ is the $ \phi $-th row of the admittance matrix $ {\bf Y}_{lm} $. Therefore, all the real-valued measurements can be written as quadratic measurements of the state variable $ {\bf v} $.

	{The available real-time measurements are usually insufficient to `{pin down}' the network state, {as we have discussed}. In this case, the system is said to be unobservable. 
	Hence, pseudo-measurements that augment the real-time measurements are crucial in DSSE as they help achieve network observability. Pseudo-measurements are obtained through load and generation forecast procedures that aim at estimating the energy consumption or generation utilizing historical data and location-based information. They are considered less accurate than real-time measurements, and hence, assigned low weights in the WLS formulation.} The functions governing the mapping from the state variable to the forecasted load and renewable energy source injections can be formulated as quadratic functions~\cite{DallAnese13, Zamzam-2016}. 
	
	Therefore, any measurement synthesizing function $ h_\ell({\bf v}) $ can be written in the following form
	\begin{equation}
	h_\ell ({\bf v}) = {\overline{\bf v} }^T {\bf D}_\ell {\bf v} + {\bf c}_\ell^T {\bf v} + {\overline{\bf c}_\ell }^T \overline{\bf v} 
	\end{equation}
	where $ {\bf D}_\ell $ is a Hermitian matrix. This renders $ J({\bf v}) $ a fourth order function of the state variable, {which is} very challenging to optimize 

	The Gauss-Newton algorithm linearizes the first order optimality conditions to iteratively update the state variables until convergence. The algorithm is known to perform well in practice given that the algorithm is initialized from a point in the vicinity of the true network state, albeit lacking provable convergence {result in theory}. Several variants of the algorithm have been proposed in the literature using polar~\cite{baran1994}, rectangular~\cite{nuqui2007hybrid} and complex~\cite{dzafic2018hybrid} representations of the state variables. All these algorithms work to a certain extent, but failure cases are also often observed. {Again,} stable convergence performance is only observed when the initialization is close enough to the optimal solution of~\eqref{eq:psse}. This is not entirely surprising---given the non-convex nature of the DSSE problem.
	
	
	\section{Proposed Approach: Learning-aided DSSE Optimization}
	{Assume that there exists a mapping ${\bf F}(\cdot)$ such that $${\bf F}({\bf z})={\bf v};$$ i.e., ${\bf F}(\cdot)$ maps the (noiseless) measurements to the ground-truth states. An example of such mapping is an optimization algorithm that can optimally solve the DSSE problem in the noiseless case, assuming that the solution is unique. The algorithm takes ${\bf z}$ as input and outputs ${\bf v}$. In reality the actual (and the virtual) measurements will be noisy, so we can only aim for  $${\bf F}({\bf z}) \approx {\bf v};$$  which is also what optimization-based DSSE aims for in the noisy case. 
		
Inspired by the recent successes of machine learning, it is intriguing to ask whether it is possible to learn mapping ${\bf F}(\cdot)$  from historical data. If the answer is affirmative and the learned $\hat{\bf F}(\cdot)$ is easy to evaluate, then the DSSE problem could be solvable in a very efficient way {\em online}, after the mapping $\hat{\bf F}(\cdot)$  is learned {\em offline}.  

In machine learning, neural networks are known as universal function approximators. In principle, a three-layer (input, hidden, output) NN can approximate any continuous multivariate function down to prescribed accuracy, if there are no constraints on the number of neurons ~\cite{cybenko1989approximation}. This motivates us to consider employing an NN for approximating ${\bf F}({\bf z})$ in the DSSE problem. An NN with vector input ${\bf z}$, vector output $ {\bf g} $, and one hidden layer comprising $T$ neurons synthesizes a function of the folowing form  
\begin{equation}\label{eq:NN} 
	{\bf g}_T({\bf z}) = \sum_{t=1}^{T} \boldsymbol{\alpha}_t \sigma({\bf w}_t^T {\bf z} + \beta_t), 
\end{equation}
where ${\bf w}_t$ represents the linear combination of the inputs in ${\bf z}$ that is fed to the $t$th neuron (i.e., the unit represented by $\sigma({\bf w}_t^T {\bf z} + \beta_t)$), ${\beta}_t$ the corresponding \ahmed{scalar} bias, and the \ahmed{vectors} $\boldsymbol{\alpha}_t$'s combine the outputs of the neurons in the hidden layer to produce the vector output of the NN.  
	The parameters $(\boldsymbol{\alpha}_t, {\bf w}_t, {\beta}_t)_{t=1}^T$ can be learned by minimizing the training cost function 
	\begin{equation}\label{eq:existing}
	  \min_{\{\boldsymbol{\alpha}_t,{\bf w}_t,{\beta}_t\}_{t=1}^T}~\sum_{j}\| {\bf v}^j - {\bf g}_T({\bf z}^j) \|_2^2, 
	\end{equation}
	where the pair $ ({\bf z}^j, {\bf v}^j) $ is a training sample of measurements and the corresponding underlying voltages to be estimated, in our context. 

	The above training cost function ideally seeks an NN that works perfectly---at least over the training set. 
	This approach is similar in spirit to the one in~\cite{sun2017learning}, which considered a problem in wireless resource allocation with the objective of `learning to optimize'---meaning, training an NN to learn the exact end-to-end input-output mapping of an optimization algorithm. Our experience has been that, for DSSE, such an approach works to some extent, but its  performance is not ideal. Trying to learn the end-to-end DSSE mapping appears to be too ambitious, requiring very large $T$ or a deep NN, and very high training sample complexity.  To circumvent these obstacles, we instead propose to train a \textit{shallow} neural network, as above, to `learn to initialize'---that is, map the available measurements to  a point in the neighborhood of the true latent state, which is then used to initialize Gauss-Newton \ahmed{as depicted in Fig.~\ref{fig:Arch}}. 
	
			\begin{figure}[h]
		\centering
		\includegraphics[width=0.5\columnwidth]{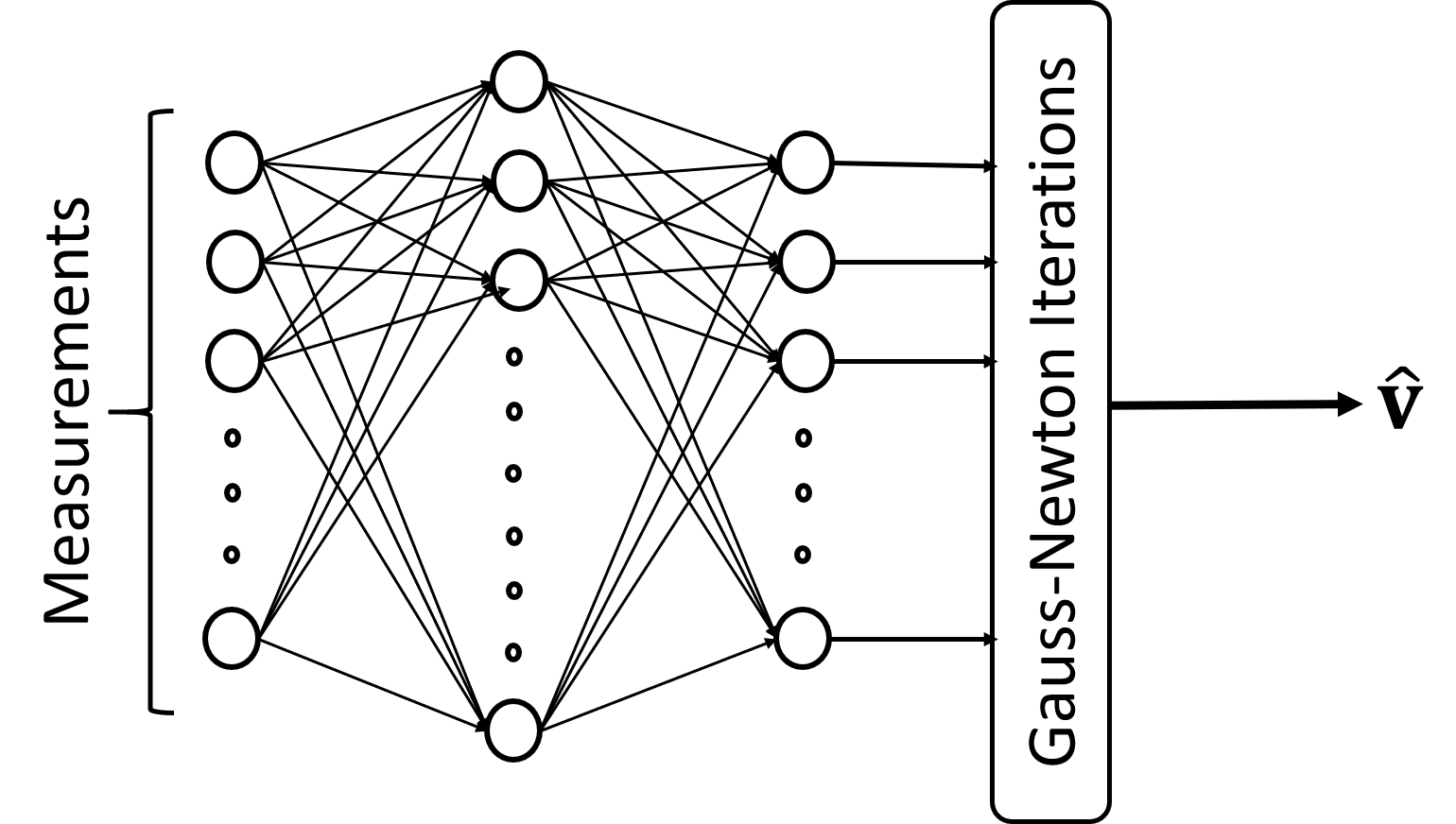}
		\caption{Learning-based state estimator structure}
		\label{fig:Arch}
	\end{figure}
	
	More specifically, we propose using the following cost function for training the NN:
	\begin{equation}\label{eq:elf2}
	 \min_{\{{\bf w}_t,{\beta}_t,\boldsymbol{\alpha}_t\}_{t=1}^T} \sum_{j}\ \max\{ \lVert {\bf v}^j - {\bf g}_T ({\bf z}^j) \rVert_2^2 - \epsilon^2,\ 0\}
	\end{equation}
	where the cost function indicates that the NN parameters are tuned with the relaxed goal that ${\bf g}_T({\bf z}^j) $ lies in the ball of radius $ \epsilon $ around ${\bf v}^j$. Fig.~\ref{fig:cost} illustrates the effect of changing the value of $ \epsilon $ on the empirical loss function. The high-level idea is as follows: instead of enforcing minimization of $\sum_{j}\| {\bf v}^j - {\bf g}_T({\bf z}^j) \|_2^2$, we seek a `lazy' solution such that $\lVert {\bf v}^j - {\bf g}_T ({\bf z}^j) \rVert_2^2 \leq \epsilon$ for as many $j$ as possible---in other words, it is enough to get to the right neighborhood. As we will show, this `lowering of the bar' can significantly reduce the complexity of the NN (measured by the number of neurons) that is needed to learn such an approximate mapping, and with it also the number of training samples required for learning. These complexity benefits are obtained while still reproducing a point that is close enough to serve as a good initialization for Gauss-Newton, ensuring stable and rapid convergence. To back up this intuition, we have the following result. 
	\begin{proposition}\label{prob:one}
		Let $ \sigma(\cdot) $ be any continuous sigmoidal function, and let ${\bf g}_T({\bf z}): \mathbb{R}^L\rightarrow \mathbb{R}^K$ be in the form $$ {\bf g}_T({\bf z}) = \sum_{t=1}^{T} \boldsymbol{\alpha}_t \sigma({\bf w}_t^T {\bf z} + \beta_t). $$ Then, for approximating a continuous mapping ${\bf F}: \mathbb{R}^L\rightarrow \mathbb{R}^K$, 
		the complexity for a shallow network to solve Problem~\eqref{eq:elf2} exactly (i.e., with zero cost) {for a finite number of bounded training samples} $\big({\bf z}^j,{\bf v}^j={\bf F}({\bf z}^j)\big)$  is at least in the order of $$ T = \mathcal{O}\bigg(\Big({\frac{\epsilon}{\sqrt{K}}}\Big)^{-\frac{L}{r}}\bigg). $$ where $r$ is the number of continuous derivatives of ${\bf F}(\cdot)$.  
	\end{proposition}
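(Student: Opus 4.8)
The plan is to read this as a lower bound (a necessity statement) and to prove it with a nonlinear-width argument, after reducing the vector-valued, $\epsilon$-ball condition to a single scalar sup-norm approximation problem at the sharper accuracy $\epsilon/\sqrt{K}$. Throughout I would assume, as the statement does, that the bounded samples $\{{\bf z}^j\}$ live in a fixed compact input region, which I normalize to $[0,1]^L$, and I would use only the fact that ${\bf F}$ has $r$ continuous derivatives.

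First I would reduce to a scalar problem and expose the $\sqrt{K}$ factor. Since we are proving a lower bound, it suffices to exhibit one hard instance of ${\bf F}$, so I would take all $K$ coordinates of ${\bf F}$ equal to a single worst-case $r$-smooth function $F^{\star}:[0,1]^L\to\mathbb{R}$. Then zero cost in \eqref{eq:elf2} requires, at every sample $j$, that $\sum_{k=1}^{K}\big(F^{\star}({\bf z}^j)-g_{T,k}({\bf z}^j)\big)^2\le\epsilon^2$. The key structural point is that in $\sum_{t}\boldsymbol{\alpha}_t\sigma({\bf w}_t^T{\bf z}+\beta_t)$ the $T$ hidden ridge units $\sigma({\bf w}_t^T{\bf z}+\beta_t)$ are shared across all outputs, while only the output weights $\alpha_{t,k}$ are free; hence each coordinate $k$ is itself a $T$-neuron scalar network $\sum_t\alpha_{t,k}\sigma({\bf w}_t^T{\bf z}+\beta_t)$ and can do no better than the common best scalar error achievable with those $T$ shared units. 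Keeping the summed squared error below $\epsilon^2$ therefore forces each coordinate to match $F^{\star}$ to accuracy $\delta:=\epsilon/\sqrt{K}$, and the \emph{same} $T$ neurons must accomplish this. The task collapses to: how many neurons must a scalar shallow net have to approximate the worst-case $F^{\star}\in C^r$ to accuracy $\delta$?

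Next I would invoke the theory of continuous nonlinear widths. The family of functions realizable by one scalar output, $\mathcal{M}_T=\big\{\sum_{t=1}^{T}a_t\,\sigma({\bf w}_t^T\cdot+\beta_t)\big\}$, is a manifold parameterized continuously by $n=T(L+2)$ real numbers. By the continuous nonlinear $n$-width lower bound for the unit ball of $C^r([0,1]^L)$ (DeVore--Howard--Micchelli, and Maiorov--Pinkus for the ridge/neural case), any continuously parameterized $n$-dimensional family approximates the worst-case $r$-smooth target no better than $c\,n^{-r/L}$ in the sup norm. Consequently there exists $F^{\star}\in C^r$ with $\inf_{g\in\mathcal{M}_T}\|F^{\star}-g\|_{\infty}\ge c\,(T(L+2))^{-r/L}\ge c'\,T^{-r/L}$, where $c'$ absorbs the dimension constant. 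Requiring this irreducible error to fall below the admissible $\delta=\epsilon/\sqrt{K}$ then forces $c'\,T^{-r/L}\le\delta$, i.e. $T\ge(c'/\delta)^{L/r}=\mathcal{O}\big((\epsilon/\sqrt{K})^{-L/r}\big)$, which is exactly the claimed order.

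The main obstacle, and the step needing the most care, is bridging the finite-sample condition in \eqref{eq:elf2} to the continuum sup-norm statement on which the width bound rests: with only finitely many samples a network with enough parameters could in principle interpolate them, so the bound is meaningful only in the regime where the bounded samples cover the input region at a resolution finer than the target accuracy. I would make this precise using the modulus of continuity of $F^{\star}$ (guaranteed by its $r$ continuous derivatives) together with a Lipschitz estimate for the network output on the compact region: if the samples form a grid of spacing $\rho$ with $\rho$ small relative to $\delta^{1/r}$, then matching every sample to within $\delta$ propagates, via the shared smoothness, to a sup-norm bound of the same order $\mathcal{O}(\delta)$, at which point the width lower bound applies verbatim. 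Confirming that this transfer is clean---choosing $\rho$, controlling the network's Lipschitz constant over the bounded domain, and verifying that the sample count itself must scale like $\delta^{-L/r}$ for the grid to be informative---is where the real work lies; the remaining pieces are standard approximation-theoretic estimates.
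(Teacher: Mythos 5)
Your core lower-bound argument is essentially the one in the paper's Appendix A, with a different citation for the key lemma. The paper also reduces to a single scalar coordinate that must be approximated to accuracy $\epsilon/\sqrt{K}$ (via the same pigeonhole observation) and then invokes Mhaskar's 1996 theorem, which gives exactly the worst-case $C^r$ lower bound $T = \mathcal{O}(\epsilon_1^{-L/r})$ that you obtain from the continuous nonlinear widths of DeVore--Howard--Micchelli and Maiorov--Pinkus; these are the same tool in substance. Three genuine differences are worth noting. First, you drop the achievability half: the paper spends most of its proof using Cybenko's theorem (after affinely normalizing the bounded samples into the unit cube) to show a zero-cost solution of \eqref{eq:elf2} exists at all, which is what makes ``the complexity of solving exactly'' well-posed; a pure necessity argument like yours should at least remark on this. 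Second, your explicit hard instance (all $K$ coordinates equal to a worst-case $F^{\star}$) is cleaner than the paper's implicit assumption that the given ${\bf F}$ has a worst-case coordinate. Third, a small slip: zero cost forces \emph{at least one} coordinate error below $\epsilon/\sqrt{K}$, not \emph{each} (e.g., one coordinate error can equal $\epsilon$ while the others vanish); your conclusion survives because one coordinate, realized by the shared $T$ hidden units, suffices for the width bound.

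On the finite-sample bridge: you have put your finger on a gap that the paper's proof silently steps over. Both Mhaskar's bound and the width bound are sup-norm statements over a continuum, whereas \eqref{eq:elf2} constrains only finitely many points; indeed $N$ samples can be interpolated exactly by roughly $N$ sigmoidal neurons regardless of $\epsilon$, so the stated lower bound cannot hold unconditionally for an arbitrary finite sample set. However, your proposed repair does not close the gap: matching a smooth $F^{\star}$ on a $\rho$-grid propagates to a sup-norm estimate only if the network's own modulus of continuity is controlled, and the class of $T$-neuron networks admits no uniform Lipschitz bound (the weights ${\bf w}_t$, $\boldsymbol{\alpha}_t$ are unconstrained), so the ``shared smoothness'' you appeal to is not available. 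Since the paper offers no bridge whatsoever, your attempt is not a step backward---but as written, neither your argument nor the paper's proves the proposition for an arbitrary finite training set; both require an additional hypothesis tying the sample density (or count) to the target accuracy $\epsilon/\sqrt{K}$.
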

	The proof of this proposition is 
	relegated to Appendix A. {Note that the boundedness assumption on the inputs is a proper assumption since these quantities represent voltages and powers}. The implication here is very interesting, as controlling $\epsilon$ can drastically reduce the required $T$ (and, along with it, sample complexity) while still ensuring an accurate enough prediction to enable rapid convergence of the ensuing Gauss-Newton stage. Furthermore, keeping the network shallow and $T$ moderate makes the actual online computation (passing the input measurements through the NN to obtain the sought initialization) simple enough for real-time operation. This way, the relative strengths of learning-based and optimization-based methods can be effectively combined, and the difficulties of both methods can be circumvented.

	One important remark is that Proposition~\ref{prob:one} is derived under the assumption that ${\bf F}(\cdot)$ 
	is a continuous mapping that can be parametrized with $L$ parameters, which is hard to verify in our case. Nevertheless, we find that the theoretical result here is interesting enough and intuitively pleasing. In a case of a simple single-phase feeder, the state estimation mapping is indeed continuous and finitely parametrizable; see Appendix B.
	More importantly, as will be seen, this corroborating theory is consistent with our empirical results.

	\begin{figure}[h]
		\centering
		\includegraphics[width=0.45\columnwidth]{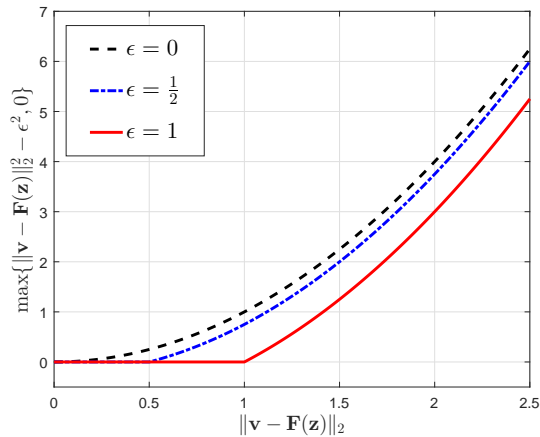}
		\caption{The empirical loss function used for training}
		\label{fig:cost}
	\end{figure}

In order to tune the neural network parameters, $ N_t $ training samples have to be used in order to minimize the cost function in~\eqref{eq:elf2}. Two different ways can be utilized in order to obtain such training data. {First, historical data for load and generation can be utilized. Note that these data are not readily available unless all the buses in the network are equipped with measuring devices, however, such load and generation profiles can be estimated using a state estimation algorithm. Then, the network power flow equations can be solved to obtain the system state which is used later to synthesize the measurements using~\eqref{eq:meas1} and~\eqref{eq:meas2}.} Hence, for each historical load and generation instance, a {noiseless} training pair $ ({\bf z}^j, {\bf v}^j) $ can be generated. The second way to obtain the training data is to resort to an operating state estimation procedure. In this case, the goal of the neural network approach is to emulate the mapping of the estimator from the measurements space to the state space. The second approach suffers all the limitations of the current state estimation algorithms such as inaccuracy or computational inefficiency. In addition to providing noisy training pairs, these limitations result in a much more time consuming way of generating training data. Therefore, the first way is adopted for the rest of this paper, and the detailed procedure is presented in the experiments section.
\ahmed{	
\begin{remark*}
	One concern for data-driven approaches is that the mapping is learned from historical data under a certain network topology. What if the configuration changes for some reason (e.g., maintenance)? Is the trained mapping still useful? The answer is, surprisingly, affirmative, thanks to the `lazy' training objective---since we do not seek exact solutions for the mapping, modest reconfiguration of the network will not destroy the effectiveness of the trained NN for initialization, as we will see in Section V.
\end{remark*}
}

	\section{Experimental Results}

	The proposed state estimation procedure is tested on the benchmark IEEE-37 distribution feeder. \ahmed{This network is recommended for testing state estimation algorithms by the Test Feeder Working Group of the Distribution System Analysis Subcommittee of the IEEE PES~\cite{Feeder-WG}.} The feeder is known to be a highly unbalanced system that has several delta-connected loads, which are blue-colored in Fig.~\ref{fig:Network}. 
	\begin{figure}[htbp]
		\centering
		\includegraphics[width=0.5\columnwidth]{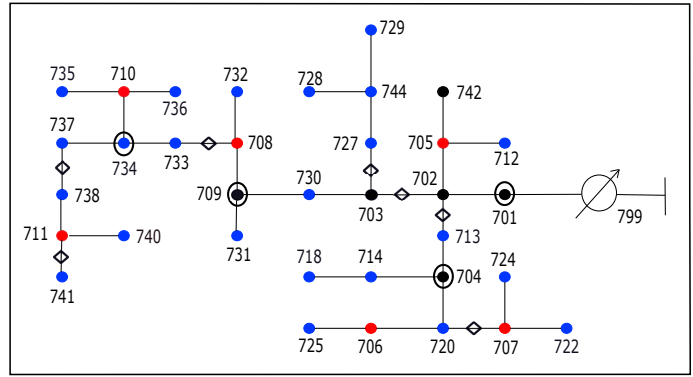}
		\caption{IEEE-37 distribution feeder. Nodes in blue are with loads, and red nodes represent buses with DER installed. Buses with PMUs are circled, and the links where the current magnitudes are measured have a small rhombus on them.}
		\label{fig:Network}
	\end{figure}

	The feeder has nodes that feature different types of connections, i.e., single-, two-, and three-phase connections. Additionally, distributed energy resources are assumed to be installed at six different buses, which are colored in red in Fig.~\ref{fig:Network}. In Table~\ref{tab:connections}, the types of the connections of all the loads and DERs are presented where (L) and (G) mean load and DER, respectively.
	\begin{table}[htbp]
		\centering
		\caption{Loads and DER Connections.}
		\begin{tabular}{|c|c|c||c|c|c|}
			\hline
			{\textbf{Bus}} & {\textbf{Type}} &\textbf{Connections} &\textbf{Bus} & {\textbf{Type}} &{\textbf{Connections}}\\
			\hline
			705     & (G)  & a-b, b-c  &  728 &  (L) & a-b, b-c, c-a\\
			\hline
			706     & (G)  & b-c       &  729 &  (L) & a-b\\
			\hline
			707     & (G)  & b-c, c-a  &  730 &  (L) & c-a\\
			\hline
			708     & (G)  & b-c       &  731 &  (L) & b-c \\
			\hline
			710     & (G)  & a-b       &  732 &  (L) & c-a\\
			\hline
			711     & (G)  & c-a       &  733 &  (L) & a-b\\
			\hline
			712     & (L)  & c-a       &  734 &  (L) & c-a\\
			\hline 
			713     & (L)  & c-a       &  735 &  (L) & c-a\\
			\hline
			714     & (L)  & a-b, b-c  &  736 &  (L) & b-c\\
			\hline
			718     & (L)  & a-b       &  737 &  (L) & a-b\\
			\hline
			720     & (L)  & c-a       &  738 &  (L) & a-b\\
			\hline
			722     & (L)  & b-c, c-a  &  740 &  (L) & c-a \\
			\hline
			724     & (L)  & b-c       &  741 &  (L) & c-a\\
			\hline
			725     & (L)  & b-c       &  742 &  (L) & a-b, b-c\\
			\hline
			727     & (L)  & c-a       &  744 &  (L) & a-b\\
			\hline 
		\end{tabular}%
		\label{tab:connections}%
	\end{table}%
	
	Historical load and generation data available in~\cite{bank2013development} modulated by the values of the loads are used to generate the training samples. Each time instance has an injection profile which is used as an input to the linearized power flow solver in~\cite{Garces2015}. The algorithm returns a voltage profile (network state variable) which is utilized to generate the value of the measurements at this point of time. A total of $ 100,000 $ loading and generation scenarios were used to train a shallow neural network. {The network has an input size of $ 103 $, $ 2048 $ nodes in the hidden layer, and output of size $ 210 $.} 
	
	The available measurements are detailed as follows.
	\begin{itemize}
		\item \emph{PMU measurements}: four PMUs are installed at buses $ 701 $, $ 704 $, $ 709 $, and $ 734 $ which are circled in Fig.~\ref{fig:Network}. It assumed that the voltage phasors of all the phases are measured at these buses. This sums up to $ 12 $ complex measurement, i.e., $ 24 $ real measurements. {We installed a unit at the substation, and then placed the rest to be almost evenly distributed along the network in order to achieve better observability.}
		\item \emph{Current magnitude measurements}: The magnitude of the current flow is measured on all phases of the lines that are marked with a rhombus in Fig.~\ref{fig:Network}. The number of current magnitude measurements is $ 21 $ real measurements. {We installed the units such that the state estimation problem can be solved without unobservability problems. We tested different installation for the current flow measuring devices with noiseless measurements, and then chose one such that the problem is not ill-posed.}
		\item \emph{Pseudo-measurments}: The aggregate load demand of the buses with load installed, which are blue-colored in Fig.~\ref{fig:Network}, are estimated using a load forecasting algorithm using historical and situational data. Therefore, only two real quantities are obtained by the state estimator that relate to the active and reactive estimated load demand at the load buses. In addition, an energy forecast method is used to obtain an estimated injection from the renewable energy sources located at the DER buses which are colored in red in Fig.~\ref{fig:Network}. The total number of load buses in the feeder is $ 23 $, and the number of distributed energy sources is $ 6 $. Therefore, the state estimator obtains $ 58 $ real pseudo-measurements relating to the active and reactive forecasted demand/injection at these buses.
	\end{itemize}

	The state estimator obtains noisy measurements and inexact load demands and energy generation quantities. It is assumed that the noise in the PMU voltage measurements is drawn from a Gaussian distribution with zero mean and a standard deviation of $ 10^{-3} $. Additionally, the noise added to current magnitudes is Gaussian distributed with a standard deviation of $ 10^{-2} $. Finally, the differences between the pseudo-measurement and the real load demand and generations are assumed to be drawn from a Gaussian distribution with a standard deviation of $ 10^{-1} $. \ahmed{The proposed learning-based state estimation approach aims at estimating the voltage phasor at all the phases of all the buses in the network.}

	The shallow neural network is trained using the TensorFlow~\cite{tensorflow2015-whitepaper} software library with $ 90 \%$ of the data used for training while the rest is used for verification. After tuning the network parameters, noisy measurements are generated and then passed to the state estimator architecture in Fig.~\ref{fig:Arch}.
	In order to show the effect of the modified cost function, we test the networks trained with different values of $ \epsilon $ on $ 1,000 $ loading and generation scenarios. 
	Fig.~\ref{fig:hist} shows the histogram of the distance between the output of the shallow NN and the true network state. With the conventional training cost function ($ \epsilon = 0 $) the resulting distribution is more spread than the histogram that we obtain through the network trained with a relaxed cost function ($ \epsilon = 1 $). 

	\begin{figure}[t]
		\centering
		\includegraphics[width=0.45\columnwidth]{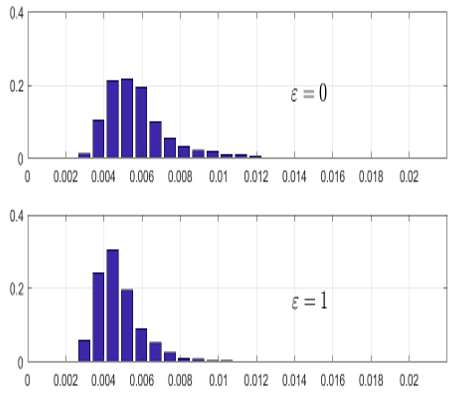}
		\caption{Histogram of the distance between the shallow NN output and the true voltage profile with ($ \epsilon = 0 $) and ($ \epsilon = 1 $).}
		\label{fig:hist}
	\end{figure}
	
	Two performance indices~\eqref{eq:estimation-accuracy}-\eqref{eq:cost} are introduced to quantify the quality of the estimate as well as the performance of the proposed approach. The first index, which is denoted by $ \nu $, represents the Frobenius norm square of the estimation error. Also, the cost function value at the estimate is denoted by $ \mu $. 
	
	\begin{equation}\label{eq:estimation-accuracy}
	\nu = \| \hat{\bf v} - {\bf v}^{\text{true}}\|_2^2
	\end{equation}
	\begin{equation}\label{eq:cost}
	\mu = \sum_{\ell = 1}^{L} (z_{\ell} - h_\ell(\hat{\bf v}))^2
	\end{equation}
	
	Furthermore, in order to show the effect of changing the cost function used for training, the average cost achieved using the proposed approach is shown in Table~\ref{tab:eps} when different values of $ \epsilon $ are used for training cost function. In addition, the average number of iterations required by the Gauss-Newton iterates to converge to the optimal estimate is also presented. Using a positive value of $ \epsilon $ can lead to savings up $ 25\% $ in computations, which is valuable when solving the DSSE for large systems. Also, it can be seen that choosing non-zero values for $ \epsilon $ enhances the performance of the proposed architecture. The estimation accuracy can be almost $ 5 $ times better using a positive $ \epsilon $. As the approximation requirement is relaxed while training the shallow NN, the network gains in generalization ability, accommodating more scenarios of loading and generation profiles. 
	
	
		\begin{table}[htbp]
			\centering
			\caption{The estimator performance with different values of ($ \epsilon $).}
			\begin{tabular}{c c c}
				\hline
				\hline\rule{0pt}{3ex} 
				{$ \boldsymbol{\epsilon} $} & {\# Iterations} &{$ \mu $}\\
				\hline\rule{0pt}{3ex} 
				{$ \bf 0 $}     &  $ 7.035 $ & $ 8.968\times 10^{-3} $ \\
				\rule{0pt}{3ex}
				{$ \bf \frac{1}{8} $}     &  $ 6.825 $ & $ 5.531\times 10^{-3}  $ \\
				\rule{0pt}{3ex} 
				{$ \bf \frac{1}{4} $}     &  $ 6.095 $ & $ 3.417\times 10^{-3}  $ \\
				\rule{0pt}{3ex} 
				{$ \bf \frac{1}{2} $}     &  $ 5.675 $ & $ 1.822\times 10^{-3}  $ \\
				\rule{0pt}{3ex} 
				{$ \bf \frac{1}{\sqrt{2}} $}     &  $ 5.220 $ & $ 5.056\times 10^{-3}  $ \\
				\rule{0pt}{3ex} 
				{$ \bf 1 $}     &  $ 6.150 $ & $ 5.859\times 10^{-3}  $ \\
				\rule{0pt}{3ex} 
				{$ \bf 2 $}     &  $ 6.415 $ & $ 1.365\times 10^{-2}  $ \\
				\hline 				\hline 
			\end{tabular}%
			\label{tab:eps}%
		\end{table}%
		
	
	To assess the efficacy of the proposed approach we compare it against the complex variable Gauss-Newton state estimator using~\cite{sorber2012unconstrained} as a state-of-art Gauss-Newton solver for a real-valued optimization problem in complex variables. The shallow NN was trained with $ \epsilon = \frac{1}{2} $ in the next comparisons. 
	
	
	
		\begin{table}[htbp]
			\centering
			\caption{Performance comparison of different state estimators}
			\begin{tabular}{c c c}
				\hline
				\hline\rule{0pt}{3ex} 
				{\textbf{Method}} & {{$\nu$}} &{$ \mu $}\\
				\hline\rule{0pt}{3ex} 
				{\bf Proposed}     &  $ 9.558\times 10^{-3} $ & $ 1.822\times 10^{-3}  $ \\
				\rule{0pt}{3ex} 
				{\bf G-N}     & $ 9.845\times 10^{-2} $  & $ 4.861 \times 10^{-2}$     \\
				\hline 				\hline 
			\end{tabular}%
			\label{tab:accuracy}%
		\end{table}%
	
	In Table ~\ref{tab:accuracy}, the average accuracy achieved in estimating the true voltage profile using both the Gauss-Newton method and the proposed architecture is presented for $ 1000 $ scenarios. In the Gauss-Newton implementation, {the complex voltages provided by the PMUs are used to initialize the voltage phasors corresponding to these buses. This provides a better initialization point to the Gauss-Newton algorithm which also enhances its stability.}
	Still, the proposed approach is able to achieve almost $ 10 $ times better accuracy on average. In addition, the fitting error which represents the WLS cost function is greatly enhanced using the proposed approach. 
	
		\begin{table}[htbp]
		\centering
		\caption{Timing and convergence of different state estimators}
			\begin{tabular}{c c c}
				\hline
				\hline\rule{0pt}{3ex} 
				{\textbf{Method}} & {{Time (ms)}} &{\# Divergence}\\
				\hline\rule{0pt}{3ex} 
				{\bf Proposed}     &  $ 347 $ & $ 0 $ \\
				\rule{0pt}{3ex} 
				{\bf G-N}     & $ 2468 $  & $ 28 $     \\
				\hline 				\hline 
			\end{tabular}%
			\label{tab:timing}%
		\end{table}%
	
	In order to assess the computational time of the proposed algorithm, we tried $ 1000 $ simulated cases for the NN-{assisted} state estimator and the Gauss-Newton ({optimization-only}) state estimator. In Table~\ref{tab:timing}, the number of divergent cases out of the $ 1000 $ trials is presented for both approaches. While the Gauss-Newton approach failed to converge in $ 28 $ scenarios, the proposed architecture has converged for all considered cases. In addition, the time taken by the proposed learning approach is almost four times less than the Gauss-Newton algorithm. This is due to the fact that only few Gauss-Newton iterations need to be done when the proposed approach is utilized.
	\ahmed{
	\section{System Reconfiguration}
	In distribution systems, the network configuration may be subject to changes either for restoration~\cite{Morelato-89}, i.e., to isolate a fault, or for system loss reduction~\cite{baran1989network,Gomes-2006}. An important task is to identify the underlying topology of the feeder. In order to perform this task, several approaches have been recently developed utilizing measurement data~\cite{Deka-2018,Cavraro2018GraphAF,Weng2017DistributedER}. Without access to the correct network topology, accurate state estimation is untenable, as the estimator will attempt to fit the measurements to a wrong model. In other words, the ground truth function generating $ {\bf z} $ is different from $ {\bf h}({\bf v}) $, and hence, solving~\eqref{eq:psse} is meaningless. Therefore, in this study we only consider the case where the (new) system topology has been adequately identified. Hence, in the latter part of the proposed approach, the Gauss-Newton iterations utilize accurate system topology information.
	
	In order to test the robustness of the proposed approach, we assume that  switches are available on several lines in the feeder as depicted in Fig.~\ref{fig:Network}. Also, we add three additional lines to the network as redundant lines that are assumed to be unenergized under normal operating conditions~\cite{DallAnese-Reconfig}. Specifically, switches are assumed to be present on the original lines $ (710, 735)$, $ (703, 730) $ and $ (727, 7444) $, and $ 3 $ additional tie lines $ (742, 744) $, $ (735, 737) $ and $ (703, 741) $ are added to the feeder. We asses the robustness of the proposed learning approach under the following three scenarios.
	\begin{itemize}
		\item Scenario A: a fault has occurred in line $ (727, 744) $ and the tie-switch on line $ (742, 744) $ has been turned on;
		\item Scenario B: a fault has occurred in line $ (703, 730) $ and the line $ (703, 741) $ has been energized; and
		\item Scenario C: the switch on line $ (710, 735) $ has been turned off, and the switch on line $ (735, 737) $ has been connected.
	\end{itemize}

	We train the NN using data that are generated from the original network topology. We assess the performance of the proposed learning-based state estimator on Scenarios A, B and C. Note that, although in these scenarios the neural network is trained on a different generating model, the proposed approach still has advantageous performance when compared with the plain Gauss-Newton approach. This can be attributed to the specially designed cost function~\eqref{eq:elf2} that was proposed to train the NN. During the course of our experiments, we noticed that the robust performance against topology reconfiguration events is more pronounced when positive $ \varepsilon $ is used for training the NN. Table \ref{tab:reconfig} compares the performance of the proposed state estimator with $ (\varepsilon = \frac{1}{2}) $ against the plain Gauss-Newton approach under the three system reconfiguration events. Clearly, the proposed approach still provides performance gains even under modest topology changes. {When the approach was tested under significant reconfiguration events, our simulations showed that the initialization produced by projecting the flat voltage profile onto the linear space defined by the PMU measurements performs better than the neural network initialization. The training procedure is not computationally intensive due to the simplicity of the model and the relaxed training cost function. Therefore, in case of severe system reconfigurations that are expected to last for long time, the shallow neural network can be retrained in the order of a few minutes to match the underlying physical model.}
	
	\begin{table}[htbp]
		\centering
		\caption{Performance comparison with system reconfiguration.\\(Averaged over 100 runs)}\label{tab:reconfig}
		\begin{tabular}{c| c |c c c}
			\hline
			\hline 
			\multicolumn{2}{c|}{\textbf{Scenario}} & {{Time (ms)}} &{$ \nu $} & $ \mu $\\
			\hline\rule{0pt}{3ex} 
			\multirow{2}{*}{\bf A}     & \bf{Proposed} & $ 476 $ & $ 3.894 \times 10^{-2} $ & $ 3.172 \times 10^{-2}$\\\cline{2-5}
			\rule{0pt}{3ex} 
			& \bf{G-N}& $ 1574 $ & $ 7.460 \times 10^{-2} $ & $ 1.147 \times 10^{-1} $\\\hline\hline
			\rule{0pt}{3ex} 
			\multirow{2}{*}{\bf B}     & \bf{Proposed}& $ 2339 $  & $ 7.963 \times 10^{-2} $ &  $ 1.051 \times 10^{-2} $  \\\cline{2-5}
			\rule{0pt}{3ex} 
			& \bf{G-N}& $ 3696 $ & $ 9.207 \times 10^{-2} $ & $ 2.050 \times 10^{-2} $ \\\hline\hline
			\rule{0pt}{3ex} 
			\multirow{2}{*}{\bf C}     & \bf{Proposed}& $ 429 $  & $ 1.297 \times 10^{-2} $  & $ 3.150 \times 10^{-3}$  \\\cline{2-5}
			\rule{0pt}{3ex} 
			& \bf{G-N}& $ 1568 $ & $ 8.527 \times 10^{-2} $ & $ 1.289 \times 10^{-1} $ \\
			\hline 				\hline 
		\end{tabular}%
	\end{table}%
	}

	\section{Conclusion}
	This paper presented a data-driven learning-based state estimation architecture for distribution networks. The proposed approach designs a neural network that can accommodate several types of measurements as well as pseudo-measurements. Historical load and energy generation data is used to train a neural network in order to produce an approximation of the network state. Then, this estimate is fed to a Gauss-Newton algorithm for refinement. Our realistic experiments suggest that the combination offers fast and reliable convergence to the optimal solution. The IEEE-37 test feeder was used to test the proposed approach in scenarios that include distributed energy sources. The proposed learning approach shows superior performance results in terms of the accuracy of the estimates as well as computation time.

\section{Appendix A\\Proof of Proposition 1}
To prove the proposition, we first invoke the following lemma:
\begin{Lemma}[{\cite[Theorem~$2 $]{cybenko1989approximation}}~] Let $ \sigma(\cdot) $ be any continuous sigmoidal function. Then, given any function $ f(\cdot) $ that is continuous on the $ d $-dimensional unit cube $ {\bm I}_d = [0, 1]^d $, and $ {\epsilon} > 0 $, there is a sum, $ g(\cdot):  \mathbb{R}^d \rightarrow \mathbb{R} $, of the form 
	\begin{equation}
	g({\bf z}) = \sum_{t=1}^{T} \alpha_t \sigma({\bf w}_t^T {\bf z} + \beta_t) 
	\end{equation} 
	for which, 
	\begin{equation}\notag
	| g({\bf z}) - f({\bf z})| < {\epsilon} \qquad \forall {\bf z} \in {\bm I}_d.
	\end{equation}
\end{Lemma}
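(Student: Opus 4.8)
The plan is to recast the approximation claim as a \emph{density} statement and settle it by a duality argument, reducing everything to a single analytic property of $\sigma$ (this reconstructs Cybenko's classical route). Let $S\subset C({\bm I}_d)$ denote the linear subspace of all finite sums $g({\bf z})=\sum_t\alpha_t\sigma({\bf w}_t^T{\bf z}+\beta_t)$; the existence of a $g$ with $|g-f|<\epsilon$ uniformly is precisely the statement that $f\in\overline{S}$, so it suffices to prove $\overline{S}=C({\bm I}_d)$. First I would argue by contradiction: if $\overline{S}$ were a proper closed subspace, the Hahn--Banach theorem would supply a nonzero bounded linear functional $\Lambda$ on $C({\bm I}_d)$ vanishing on $\overline{S}$, hence on every $\sigma({\bf w}^T{\bf z}+\beta)$. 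By the Riesz representation theorem, $\Lambda$ is integration against a nonzero signed regular Borel measure $\mu$ on ${\bm I}_d$, so that
\[
\int_{{\bm I}_d}\sigma({\bf w}^T{\bf z}+\beta)\,d\mu({\bf z})=0\qquad\text{for all }{\bf w}\in\mathbb{R}^d,\ \beta\in\mathbb{R}.
\]

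The crux is to show this forces $\mu=0$, i.e.\ that any continuous sigmoidal $\sigma$ is \emph{discriminatory}. Here I would exploit the defining asymptotics $\sigma(s)\to1$ as $s\to+\infty$ and $\sigma(s)\to0$ as $s\to-\infty$. Fixing ${\bf w},\beta$ and an auxiliary scalar $\phi$, I consider the rescaled integrand $\sigma(\lambda({\bf w}^T{\bf z}+\beta)+\phi)$ and let $\lambda\to\infty$; pointwise it tends to $1$ on the open half-space $\{{\bf w}^T{\bf z}+\beta>0\}$, to $0$ on $\{{\bf w}^T{\bf z}+\beta<0\}$, and to $\sigma(\phi)$ on the hyperplane $\{{\bf w}^T{\bf z}+\beta=0\}$. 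Since $\sigma$ is bounded, the bounded convergence theorem justifies passing to the limit inside the integral, and the vanishing hypothesis yields $\sigma(\phi)\,\mu(\Pi_{{\bf w},\beta})+\mu(H^+_{{\bf w},\beta})=0$ for every $\phi$, where $\Pi_{{\bf w},\beta}$ and $H^+_{{\bf w},\beta}$ are the hyperplane and open half-space. Choosing two values of $\phi$ with distinct $\sigma(\phi)$ then separates the two terms and gives $\mu(H^+_{{\bf w},\beta})=0$ and $\mu(\Pi_{{\bf w},\beta})=0$ for \emph{all} half-spaces and hyperplanes.

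The final step converts vanishing on all half-spaces into $\mu=0$. Fixing ${\bf w}$, I would study the functional $h\mapsto\int_{{\bm I}_d}h({\bf w}^T{\bf z})\,d\mu({\bf z})$ on bounded measurable $h$: by the previous step it annihilates the indicator of every half-line, hence every indicator of an interval, hence (by linearity and uniform approximation of bounded measurable functions by simple functions) every bounded measurable $h$. Taking $h(u)=e^{\mathrm{i}su}$ shows $\int_{{\bm I}_d}e^{\mathrm{i}s({\bf w}^T{\bf z})}\,d\mu({\bf z})=0$ for all $s$ and all ${\bf w}$; since $s{\bf w}$ sweeps out $\mathbb{R}^d$, the Fourier transform $\widehat{\mu}$ vanishes identically, and injectivity of the Fourier transform on finite signed measures forces $\mu=0$. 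This contradicts $\Lambda\neq0$, so $\overline{S}=C({\bm I}_d)$, which is the asserted approximation property.

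I expect the main obstacle to be exactly the discriminatory property in the middle: the Hahn--Banach/Riesz reduction is routine, but the genuine content is the limiting argument as $\lambda\to\infty$---controlling the hyperplane term $\mu(\Pi_{{\bf w},\beta})$ and rigorously justifying the interchange of limit and integral---together with the density step that upgrades ``$\mu$ charges no half-space'' to the Fourier-transform conclusion $\mu=0$.
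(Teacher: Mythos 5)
The paper does not actually prove this lemma---it is imported verbatim, with citation, as Theorem~2 of Cybenko (1989), so there is no internal proof to compare against; your reconstruction is precisely Cybenko's original argument (Hahn--Banach plus Riesz representation, the scaling/limit argument showing every continuous sigmoidal function is discriminatory, then Fourier uniqueness for finite signed measures), and it is correct. The one imprecision is your intermediate claim that the functional $h\mapsto\int_{{\bm I}_d}h({\bf w}^T{\bf z})\,d\mu({\bf z})$ annihilates \emph{every} bounded measurable $h$: knowing only that it kills indicators of half-lines and intervals, uniform approximation by step functions reaches regulated (in particular continuous) functions, not arbitrary bounded measurable ones; but since the only $h$ you actually need is $e^{\mathrm{i}su}$, which is continuous and hence uniformly approximable by step functions on the compact image of ${\bm I}_d$ under ${\bf z}\mapsto{\bf w}^T{\bf z}$, the argument goes through unchanged.
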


\begin{proof}[Proof of Proposition~1]
	Note that the vector-valued function $ {\bf F}(\cdot) $ can be represented as $ K $ separate scalar-valued functions. 
	In order to prove the proposition, we start by considering approximating a scalar-valued function $ f_k({\bf z}) $ that represents the mapping between $ {\bf z} $ and the $ k $-th element of $ {\bf F}({\bf z}) $. 
	
	Since $ {\bf z}^j $'s are finite with length $ L $, finite maximum and minimum along each dimension can be obtained. Let the vectors that collect the maximum and minimum values be denoted by $ \overline{\bf z} $ and $ \underline{\bf z} $, respectively. Then, each training sample $ {\bf z}^j $ is replaced by $ \tilde{\bf z}^j  = {\bf D}_{\overline{\bf z} - \underline{\bf z}} ({\bf z}^j - \underline{\bf z}) $, where $ {\bf D}_{\overline{\bf z} - \underline{\bf z}} $ is a diagonal matrix that has the values of $ \overline{\bf z} - \underline{\bf z} $ on the diagonal. Therefore, the vectors $ \tilde{\bf z}^j $ are inside the $ L $-dimensional cube $ {\bm I}_L $. According to Lemma~1, there exists a sum $ \tilde{g}_k(\tilde{\bf z}) $ in the form of
	\begin{equation}\label{eq:single_out_NN}
	\tilde{g}_k(\tilde{\bf z}) = \sum_{t=1}^{T_k} \tilde{\alpha}_{t,k} \sigma(\tilde{{\bf w}}_{t,k}^T {\bf z} + \tilde{\beta}_{t,k}) 
	\end{equation} 
	that satisfies 
	\begin{equation}
	| f_k(\tilde{\bf z}^j) - \tilde{g}_k(\tilde{\bf z}^j) | < {\epsilon}_1 \qquad \forall\ \tilde{\bf z}^j
	\end{equation}
	for $ {\epsilon}_1 > 0 $. 
	Then, let $ {g}_k({\bf z}) $ be a mapping in the form of~\eqref{eq:single_out_NN} where the parameters are given by 
	\begin{eqnarray}\notag
	\alpha_{t,k}\! = \tilde{\alpha}_{t,k},\ \ \beta_{t,k}\! =\tilde{\beta}_{t,k}\! -\! \tilde{\bf w}_{t,k}^T {\bf D}_{\overline{\bf z} - \underline{\bf z}}\ \underline{\bf z},\ \ {\bf w}_{t,k}\! = {\bf D}_{\overline{\bf z} - \underline{\bf z}} \tilde{\bf w}_{t,k}.
	\end{eqnarray}
	Then, for all $ {\bf z}^j $ we have
	\begin{equation}
	| f_k({\bf z}^j) - {g}_k({\bf z}^j) | < \epsilon_1.
	\end{equation}
	
	This result holds for each of the $K$ scalar elements of ${\bf F}({\bf z})$. Therefore, by parallel concatenation of the $K$ neural networks used to approximate the $K$ scalar-valued functions, we obtain a shallow neural network that has $ K $ outputs and ($ \sum_i T_i $) neurons at the hidden layer. Setting $ \epsilon = \sqrt{K}\ \epsilon_1 $, we deduce that there exists a sum $ {\bf g}_T({\bf z}) $ in the form of~\eqref{eq:NN} that satisfies
	\begin{equation}
	\| {\bf F}({\bf z}^j) - {\bf g}_T({\bf z}^j) \|_2 < {\epsilon} \qquad \forall\ {\bf z}^j.
	\end{equation}
	It is clear now that the parameters of this function $ {\bf g}_T({\bf z}) $, i.e., $ \boldsymbol{\alpha}_t $, $ {\bf w}_t $, and $ \beta_t $, achieve a zero cost function solving Problem~\eqref{eq:elf2}, and hence is optimal in solving~\eqref{eq:elf2}.

	In addition, since an approximation can be realized using any sigmoid functions, the main result in~\cite{mhaskar1996neural} specifies that the minimum number of neurons required to achieve accuracy at least $ \epsilon_1 $ for a scalar-valued function is given by 
	\begin{equation}
	T = \mathcal{O}(\epsilon_1^{-\frac{L}{r}})
	\end{equation}
	where $ r $  denotes the number of continuous derivatives of the approximated function $ f({\bf z}) $, and $ L $ represents the number of parameters of the function. {In order to achieve $ \epsilon $ accuracy for approximating $ {\bf F}({\bf z}) $, at least one of the real-valued functions that construct $ {\bf F}({\bf z}) $ has to achieve $ \frac{\epsilon}{\sqrt{K}} $.}
	Hence, the complexity of shallow neural networks that optimally solve~\eqref{eq:elf2} for $\epsilon > 0 $ is at least
	\begin{equation}\notag
	T = \mathcal{O}\bigg(\Big({\frac{\epsilon}{\sqrt{K}}}\Big)^{-\frac{L}{r}}\bigg).
	\end{equation}
\end{proof}

\section*{Appendix B\\ An Example Network}
The neural networks are known as universal functions approximators. Nevertheless, the theoretical results on the ability to approximate function are usually limited to continuous functions. Hence, for continuous functions, the neural networks are expected to be able to achieve high approximation accuracy. 
Unfortunately, checking the continuity of the state estimation solution, which is an inverse mapping of a highly nonlinear function, is not simple to be checked. 

\begin{figure}[t]
	\centering
	\includegraphics[width=0.45\columnwidth]{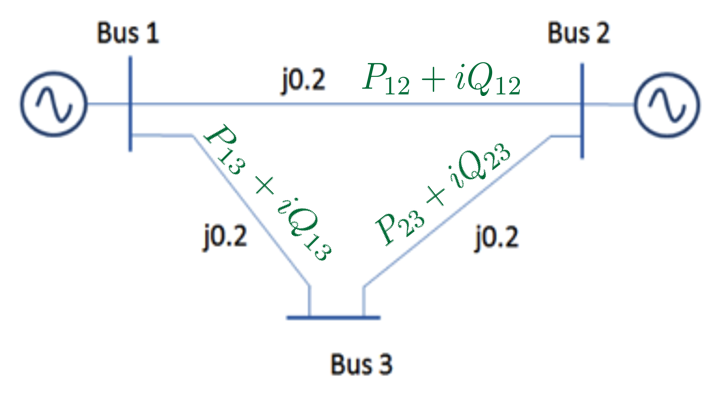}
	\caption{Example $ 3 $-bus network}
	\label{fig:example}
	\vspace{0pt}
\end{figure}

In this appendix, a $ 3 $-bus balanced lossless network is presented, in Fig.~\ref{fig:example}, in order to inspect the continuity of the state estimation mapping. We assume that the simple network has $ 3 $ buses and that the magnitude of the voltages are measured at all buses. In addition, the active and reactive power flows are measured at all lines. Since, the phase at Bus $ 1 $ can be taken as a reference for the other buses, the state estimation problem amounts to estimating the lines phase differences, or equivalently, the phases at Bus $ 2 $ and Bus $ 3 $.

The power flow equations can be expressed as follows
\begin{eqnarray}
P_{12} =& B_{12}|v_1| |v_2| \sin(\theta_{12}),\\
Q_{12} =& |v_1|^2 - B_{12}|v_1| |v_2| \cos(\theta_{12}),\\
P_{13} =& B_{12} |v_1| |v_3| \sin(\theta_{13}),\\
Q_{13} =& |v_1|^2 - B_{12} |v_1| |v_3| \cos(\theta_{13})
\end{eqnarray}
where $ B_{ij} $ is the susceptance of the line between Bus $ i $ and Bus $ j $, $ |v_i| $ is the voltage magnitude at the $ i $-th Bus, and $ \theta_{ij} $ is the angle difference on the line $ (i,j) $. Assuming that the collected measurements are noiseless, the solution of the state estimation problem can be written in closed-form as
\begin{equation}\label{eq:map1}
\theta_{12} = \sin^{-1} \Big(\frac{P_{12}}{B_{12}|v_1| |v_2|}\Big),
\end{equation}
\begin{equation}\label{eq:map2}
\theta_{13} = \sin^{-1} \Big(\frac{P_{13}}{B_{13}|v_1| |v_3|}\Big).
\end{equation}
\begin{claim}
	The mapping between the measurements $ P_{12}, P_{13}, |v_1|, \text{ and } |v_2| $ and the state of the network, i.e., $ \theta_{12} $ and $ \theta_{13} $, is continuous if 
	\begin{equation}
	B_{12}|v_1| |v_2| \geq \epsilon,\text{ and }\qquad B_{13}|v_1| |v_3| \geq \epsilon
	\end{equation}
	for any $ \epsilon >0 $.
\end{claim}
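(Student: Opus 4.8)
The plan is to exhibit the state map as a composition of elementary continuous maps, and to argue that the hypothesis $B_{12}|v_1||v_2| \ge \epsilon$, $B_{13}|v_1||v_3| \ge \epsilon$ removes the only point at which such a composition could fail to be continuous, namely division by zero. Concretely, write the map under analysis as $(P_{12},P_{13},|v_1|,|v_2|,|v_3|) \mapsto (\theta_{12},\theta_{13})$ with $\theta_{12}=\sin^{-1}(u_{12})$ and $\theta_{13}=\sin^{-1}(u_{13})$, where $u_{12}:=P_{12}/(B_{12}|v_1||v_2|)$ and $u_{13}:=P_{13}/(B_{13}|v_1||v_3|)$ are the arcsine arguments appearing in~\eqref{eq:map1}--\eqref{eq:map2}. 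It suffices to prove continuity of each scalar component; the joint map into $\mathbb{R}^2$ is then continuous because a vector-valued map is continuous if and only if each of its coordinate functions is.

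First I would treat the inner quotient $u_{12}$. The numerator $P_{12}$ is a coordinate projection, hence continuous, and the denominator $B_{12}|v_1||v_2|$ is a product of continuous functions (recall that $B_{12}$ is a fixed constant), hence also continuous. Restricting the domain to the closed set $\mathcal{D}_\epsilon := \{B_{12}|v_1||v_2| \ge \epsilon,\ B_{13}|v_1||v_3| \ge \epsilon\}$ keeps the denominator bounded below by $\epsilon>0$, so the reciprocal map $x\mapsto 1/x$ is continuous there (it is continuous on any set staying away from a neighborhood of $0$). Therefore $u_{12}$, being the product of its continuous numerator with the continuous reciprocal of its denominator, is continuous on $\mathcal{D}_\epsilon$; the identical argument gives continuity of $u_{13}$.

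Next I would record two standard facts about the outer map. The arcsine $\sin^{-1}:[-1,1]\to[-\pi/2,\pi/2]$ is continuous, being the inverse of the continuous strictly increasing bijection $\sin:[-\pi/2,\pi/2]\to[-1,1]$ on a compact interval. To apply it I must check that $u_{12}$ actually lands in $[-1,1]$: since the measurements are the \emph{noiseless} quantities generated by a genuine network state through the power-flow relations, we have $u_{12}=P_{12}/(B_{12}|v_1||v_2|)=\sin(\theta_{12})\in[-1,1]$ by construction, and likewise for $u_{13}$. Hence the composition $\theta_{12}=\sin^{-1}(u_{12})$ is well-defined and, as a composition of continuous maps, continuous on $\mathcal{D}_\epsilon$; the same conclusion holds for $\theta_{13}$, which completes the argument.

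The only genuinely delicate point is domain bookkeeping rather than any substantive estimate: the map is a composition of smooth elementary operations, so the sole obstruction to continuity is the singularity of $x\mapsto 1/x$ at the origin, and this is precisely what the hypothesis $B_{12}|v_1||v_2|\ge\epsilon$ (and its counterpart for line $(1,3)$) excludes. I would also note that continuity persists up to and including $u_{12}=\pm 1$ (i.e.\ $\theta_{12}=\pm\pi/2$), where $\sin^{-1}$ remains continuous even though its derivative blows up, and that the principal-branch inversion correctly recovers the true angle provided the operating point lies in $(-\pi/2,\pi/2)$; outside that range one would invoke the reactive-power measurement $Q_{12}$ to resolve the branch, but this does not affect the continuity claim on $\mathcal{D}_\epsilon$.
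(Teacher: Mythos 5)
Your proof is correct and follows essentially the same route as the paper's: both express $\theta_{12}$ and $\theta_{13}$ as compositions of the quotient map (continuous because the hypothesis bounds the denominator away from zero) with the continuous function $\sin^{-1}(\cdot)$, and conclude by the composition rule. Your additional checks---that the arcsine argument lies in $[-1,1]$ and that the principal branch is the relevant one---are sound refinements of the same argument rather than a different approach.
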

\begin{proof}
	The proof is straightforward and build upon basic results from real functions analysis. First, the function
	\begin{equation}
	f_1 ( P_{12}, |v_1|, |v_2| ) = \frac{P_{12}}{B_{12}|v_1| |v_2|}
	\end{equation}
	is continuous on $ B_{12}|v_1| |v_2| \in [\epsilon, \infty] $ for any $ \epsilon > 0  $. Then, the mapping functions in~\eqref{eq:map1} is composite function of $ f_1 $ and $ \sin^{-1}(\cdot) $ which is a continuous function. Therefore, the mapping in~\eqref{eq:map1} is continuous on $ B_{12}|v_1| |v_2| \in [\epsilon, \infty] $ for any $ \epsilon > 0  $~\cite{rudin1964principles}. The same follows for $ \theta_{13} $.
\end{proof}

	\bibliographystyle{IEEEtran}
	\bibliography{RefPSSE}

\end{document}